\newtheorem{dfn}{Definition}[section]
\newtheorem{thm}[dfn]{Theorem}
\newtheorem{lem}[dfn]{Lemma}
\renewenvironment{proof}{\noindent{\bf Proof. }  \rm } {\hfill{$\Box$}\\}
\numberwithin{equation}{section}
\begin{document}
\thispagestyle{empty}

\fancyhead[LE]{A new integer-valued AR(1) process based on power series thinning operator}

\fancyhead[RO]{E. Mahmoudi, A. Rostami and R. Roozegar}

\begin{center}
\vspace*{4cm}
{\Large \bf A new integer-valued AR(1) process based on power series thinning operator
}\\
{
E. Mahmoudi, A. Rostami, R. Roozegar \\
Department of Statistics, Faculty of Mathematical Sciences, \\
Yazd University, Yazd, Iran\\
}
\end{center}
\begin{abstract}
In this paper, we introduce the first-order integer-valued autoregressive (INAR(1)) model, with Poisson-Lindley innovations based on power series thinning operator. Some mathematical features of this process are given and estimating the parameters is discussed by three methods; conditional least squares, Yule-Walker equations and conditional maximum likelihood.Then the results are studied for three special cases of power series operators. Finally, some numerical results are presented with a discussion to the obtained results and Four real data sets are used to show the potentially of the new process.
\end{abstract}
\noindent{\bf Keywords:} \\
Integer-value autoregressive processes; Power series distributions; Poisson-Lindley distribution; Thinning operator; Yule-Walker equations.

\section{Introduction}	
In the last few decades, discrete valued time series have been played an important role in scientific research. Many time series in practice have a discrete nature, such as: the number of daily accident on the roads, the number of reserved rooms at a hotel for several days, the number of accidents on a free way every day, the number of chromosome interchanges in cells, the number foggy days, the number of bases of DNA sequences, and so on.

Integer-valued time series to model count data are encountered in many context, therefore, the study and analysis of such count time series is important and motivates a novel research branch with many practical applications. many authors have been analyze integer-valued time series. Jacobs and Lewis \cite{Jacobs1,Jacobs2,Jacobs3} presented the DARMA models. The INAR(1) process were introduced by Mckenzie \cite{McKenzie1985} and AL-Osh and Alzaid \cite{Al-Osh} based on thinning operator. Rist\'{i}c et al. \cite{Rist2009} introduced the geometric first-order integer-valued autoregressive (NGINAR(1)) process with geometric marginal distribution. Recently, Aghababaei Jazi et al. \cite{Aghababaei2012a} discussed a new stationary first-order integer-valued autoregressive process with zero-inflated Poisson innovations (ZINAR(1)). Aghababaei Jazi et al. \cite{Aghababaei2012b}  proposed the geometric INAR(1) process with geometric innovations (INARG(1)). Schweer and Wei\textit{B} \cite{Schweer} introduced a first-order non-negative integer-valued autoregressive process with compound-Poisson innovations (CPINAR(1)) based on the binomial thinning operator. Among models based on the generalizations of the binomial thinning operator, we cite Aly and Bouzar \cite{Aly1994a} and Rist\'{i}c et al. \cite{Rist2012}.

In real-life situations, there are time series of equi-dispersion, over-dispersion and under-dispersion count data. For over-dispersed count data, the integer-valued AR(1) models have been introduced not only based on the over-dispersed marginal distribution but also on the over-dispersed innovations. For example compound Poisson INAR(1) processes: stochastic properties and testing for over-dispersion  \cite{Schweer}, First-order mixed integer-valued  autoregressive processes with zero-inflated generalized power series innovations \cite{Li}, First-order integer-valued AR processes with zero-inflated Poisson innovations \cite{Aghababaei2012a} and integer-valued AR(1) with geometric innovations \cite{Aghababaei2012b}.\\
Recently, Mohammadpour et al.  \cite{Mohammadpour} proposed a first-order integer-valued autoregressive process with Poisson-Lindley marginals based on the binomial thinning. The
innovation structure form this model is complex, consequently and the conditional probabilities of this model do not have a simple form. Also, L\'{i}vio et al. introduced an new INAR(1) model with Poisson-Lindley innovations based on the binomial thinning operator, denoted by INARPL(1) model, for modelling non-negative integer-valued time series with over-dispersion.

In this paper, we propose a new stationary INAR(1) process for modelling count time series based on the power series thinning operator under Poisson-Lindley innovations. We will provide a comprehensive account of the mathematical properties of the proposed  new process. Using the power series distribution as a thinning operator has the advantage that this operator contains the Poisson, binomial, begative binomial and geometric operators as a special case and by fitting this operator to the count time series data, one can obtain the results of these four operators, simultaneously.

The motivation for such process arises from its potential in modelling and analyzing non-negative integer-valued time series when there is an indication of over-dispersion distributions. The use of innovations that come from the Poisson-Lindley distribution, (i) has many advantages than the other discrete distributions and has many applications in biology, (ii) the Poisson-Lindley distribution belongs to compound Poisson family and has other common properties such as unimodality, over-dispersion, and infinite divisibility, (iii) the Poisson-Lindley distribution can be viewed as mixture of geometric and negative binomial distribution which case the smaller amount of skewness and kurtosis of the Poisson-Lindley distribution than the negative binomial distribution (see Ghitany and Al-Mutairi, 2009 for more details about the Poisson-Lindley distribution), (iv) in biological and medical sciences, the occurrence of successive events is dependent. The Poisson and negative binomial distributions can not give a reasonable fit to the biological and medical count data, because of the equi-dispersion of Poisson and under-dispersion of negative binomial. Instead, the Poisson-Lindley distribution is a good candidate for modelling data in ecology, genetics, biological and medical science because of its over-dispersion property; \cite{Shanker}.

The paper is outlined as follows. In Section 2, after introducing the Poisson-Lindley distribution and power series thinning operator, we introduce a new stationary first-order integer-valued autoregressive process with Poisson-Lindley innovations. Several statistical properties of the new process are outlined in this section. In Section 3 , the estimation methods such as conditional least squares, Yule-Walker and the maximum likelihood are obtained. Three special cases of the proposed model are studied in Section 4. Moreover, some numerical results of the estimators are discussed in Section 5. In Section 6, we provide applications to four real data sets and discuss the obtained results. Finally, Section 7 concludes the paper.


\section{Construction of the model}
In this section we introduce a stationary first-order integer-valued autoregressive process with Poisson-Lindley innovations based on power series operator (PSINARPL(1)).
In this paper, we assume that the innovations of process follow a Poisson-Lindley distribution, so we focus on some properties of the Poisson-Lindley distribution.\\
The random variable $X$ is distributed as Poisson-Lindley distribution if its probability mass function can be written in the form
\begin{equation*}
P(X=x)=\frac{\theta^2(x+\theta+2)}{(\theta+1)^{x+3}},~~~x=0,1,2,...~~~~~~\theta>0,
\end{equation*}
which was introduced firstly by Sankaran \cite{Sankaran}. Expectation and variance of this distribution are given by
\begin{equation*}
E(X)= \frac{\theta + 2}{\theta (\theta + 1)}, ~~~~ Var(X)= \frac{\theta^3 + 4 \theta^2 + 6\theta + 2}{\theta^2 (\theta + 1)^2}.
\end{equation*}
Also, the probability generating function and moment generating function are
\begin{eqnarray*}
\varphi_X (t)&=& \frac{\theta^2}{1+\theta} \left[ \frac{1}{(1 + \theta -t)^2} + \frac{1}{(1 + \theta -t)} \right] , \\
 M_X (t)&=& \frac{\theta^2}{1+\theta} \left[ \frac{1}{(1 + \theta -e^t)^2} + \frac{1}{(1 + \theta -e^t)} \right].
\end{eqnarray*}

Figure \ref{p1} shows the pmf of the PL distribution for different values $\theta$.
\begin{figure}[!t]
\centering
\includegraphics[height=10.5cm,width=13cm]{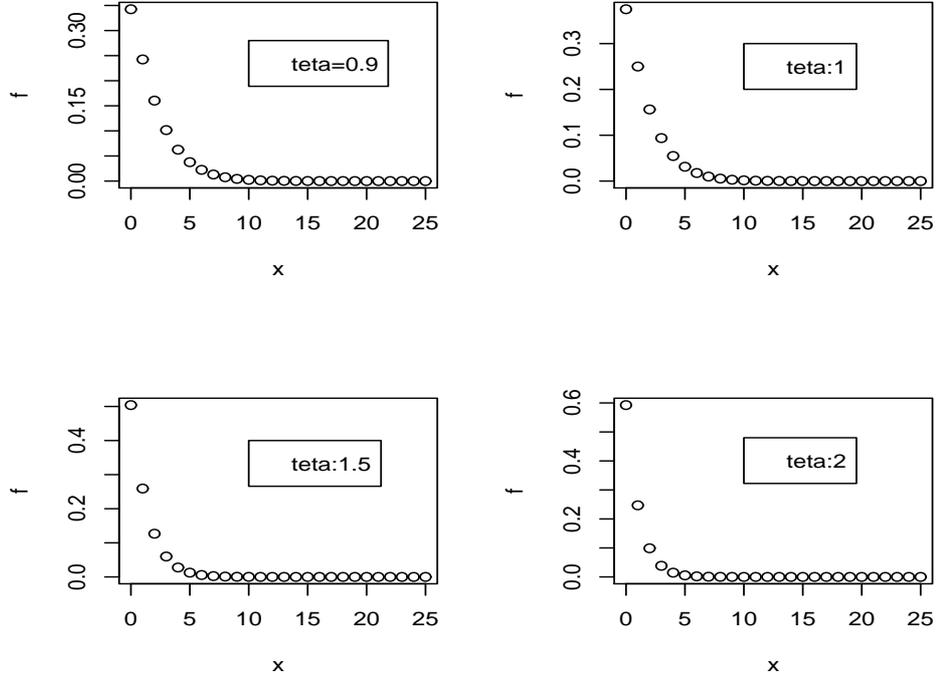} \vspace{-1cm}
\caption{\footnotesize{Probability mass function of PL distribution for different values $\theta$
}} \label{p1}
\end{figure}

The random variable $Y$ with probability mass function
\begin{equation} \label{f2}
 P(Y=y)= \frac{a (y) \beta ^y }{C(\beta)}, \qquad y \in T,
\end{equation}
has power series distribution with range $T$, where $T$ is a subset of the non-negative integer numbers and $C(\beta )=\sum_T a (y) \beta^y $ is finite for all $\beta \in (0, t)$, $a (y) > 0$. The expectation, variance and probability generating function of power series distribution are
$E(Y)= \beta G^{'} (\beta)$, $Var(Y)= \beta G^{'} (\beta) + \beta^2 G^{''} (\beta)$ and $\varphi_Y (s) = \dfrac{C(s \beta)}{C( \beta )} $, where $ G(\beta)=\log C(\beta ) $, $G^{'} (\beta) =\frac{d}{d \beta} G (\beta)  $ and $G^{''} (\beta) =\frac{d^2}{d \beta ^2} G (\beta)$.\\
The following table presents the quantities of the power series distribution family with respect to $\beta$, $C(\beta )$, $a (y) $, $t$ and $T$.

\begin{table}[!h]
{\footnotesize \begin{center}
\begin{tabular}{|l|c|c|c|c|c|} \hline
Distribution & $\beta$ & $C(\beta )$ &  $a (y)$ & $T$  & $t$ \\
\hline
Binomial with parameters $(n,p)$ & $\frac{p}{1-p} $ & $(1+\beta )^n $ & ${{n}\choose{y}}$  & $\{ 0,1,2,...,n \}$  & $\infty $  \\
\hline
Poisson with parameter $\lambda $ & $\lambda$ & $e ^{\beta }$ & $\frac{1}{y! }$ & $\{0,1,... \}$  & $\infty $  \\
\hline
Geometric with parameter  $p$ & $1-p$ & $(1- \beta )^{-1}$ & $1$ & $\{0,1,... \}$  & $1$  \\
\hline
NB with parameter $(r,p)$ & $1-p$ & $(1- \beta )^{-r}$ & $\frac{\Gamma (r+y) }{y! \Gamma (r)} $ & $\{0,1,... \}$  & $1$  \\
\hline
\end{tabular}
\end{center}}
\end{table}

\begin{dfn} \label{defn1.1} (Power series thinning operator) \\
Assume that $X$ is a non-negative integer-valued random variable. Then for each $ \alpha >0$,  power series thinning operator is defined as
\begin{equation} \label{f1}
 \alpha oX=\sum_{i=1}^X Y_i ,
\end{equation}
where $\{Y_i \}$ is a sequence of independent and identically distributed (i.i.d) power series  random variables that are independent of $X$.
\end{dfn}

\begin{dfn} \label{defn1.4} (Construction of the model based on  power series thinning operator) \\ The first-order integer-valued autoregressive model with PL innovations based on power series thinning  operator (PSINARPL(1))) is defined as
\begin{equation} \label{f3}
X_t = \alpha o X_{t-1} + W_t , \qquad \qquad t \geq 1,
\end{equation}
where $W_t$'s are independent and identically distributed random variables from PL distribution that are independent from $Y_i$'s and also from $X_{t- l}$ for $l \geq 1$. Operator $o$ shows the power series thinning that is introduced in Eq. \eqref{f1}.
Note that  $\alpha \in (0,1)$   satisfies dependence and stationary of $X_t$, whereas $\alpha =0 $, and $\alpha\geq1$ implies independence and non-stationary of $X_t$.
\end{dfn}

\subsection{ Statistical properties of the model}
 \begin{lem} \label{lem1}
The mean and variance of $X_t$, (PSINARPL(1) model), are given respectively by
  \begin{description}
\begin{eqnarray} \label{f4}
E(X_t )=  \dfrac{\theta + 2}{\theta (\theta +1)(1-\alpha )} ,
\end{eqnarray}
\begin{eqnarray} \label{f5}
Var(X_t )=\dfrac{\delta( \theta + 2)}{\theta (\theta +1)(1-\alpha)(1-\alpha^2 )} + \dfrac{\theta^3 +4\theta^2 +6\theta +2}{\theta^2 (\theta +1 )^2  (1-\alpha^2)}.
\end{eqnarray}
\end{description}
\end{lem}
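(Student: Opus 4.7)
The plan is to exploit stationarity together with the usual conditioning identity for the randomly-stopped sum in $\alpha o X_{t-1}=\sum_{i=1}^{X_{t-1}}Y_i$. The parameter $\alpha$ is built into the power series thinning so that $E(Y_1)=\alpha$ (and $\delta=Var(Y_1)=\beta G'(\beta)+\beta^{2}G''(\beta)$, using the moments recalled from \eqref{f2}); this is the only fact about the $Y_i$'s we shall need beyond their being i.i.d.\ and independent of $X_{t-1}$ and $W_t$.

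For the mean, I would take expectations in \eqref{f3} and condition on $X_{t-1}$ to get
\begin{equation*}
E(X_t)=E\!\left[E\!\left(\sum_{i=1}^{X_{t-1}}Y_i\,\Big|\,X_{t-1}\right)\right]+E(W_t)=\alpha E(X_{t-1})+E(W_t).
\end{equation*}
Stationarity gives $E(X_t)=E(X_{t-1})$, so $E(X_t)=E(W_t)/(1-\alpha)$, and substituting the Poisson--Lindley mean $E(W_t)=(\theta+2)/(\theta(\theta+1))$ recorded earlier yields \eqref{f4}.

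For the variance I would apply the law of total variance to the thinned sum, using conditional independence on $X_{t-1}$:
\begin{equation*}
Var(\alpha o X_{t-1})=E\!\bigl[X_{t-1}\,Var(Y_1)\bigr]+Var\!\bigl[X_{t-1}\,E(Y_1)\bigr]=\delta\,E(X_{t-1})+\alpha^{2}\,Var(X_{t-1}).
\end{equation*}
Because $W_t$ is independent of $\alpha o X_{t-1}$, the recursion \eqref{f3} gives
\begin{equation*}
Var(X_t)=\alpha^{2}Var(X_{t-1})+\delta E(X_{t-1})+Var(W_t),
\end{equation*}
and stationarity again yields $Var(X_t)=\bigl(\delta E(X_t)+Var(W_t)\bigr)/(1-\alpha^{2})$. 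Plugging in the value of $E(X_t)$ from \eqref{f4} and the Poisson--Lindley variance $Var(W_t)=(\theta^{3}+4\theta^{2}+6\theta+2)/(\theta^{2}(\theta+1)^{2})$ recorded before the table gives exactly the two summands in \eqref{f5}.

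There is no real obstacle in the proof; it is essentially bookkeeping. The one point that has to be handled with care is to make clear that the power series thinning inherits the conditioning identities of a compound sum, i.e.\ that $E(\alpha oX\mid X)=\alpha X$ and $Var(\alpha oX\mid X)=\delta X$; once this is stated, both \eqref{f4} and \eqref{f5} fall out by solving the fixed-point equations supplied by stationarity.
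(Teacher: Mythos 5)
Your proposal is correct and follows essentially the same route as the paper: both reduce to the compound-sum identities $E(\alpha oX_{t-1})=\alpha E(X_{t-1})$ and $Var(\alpha oX_{t-1})=\delta E(X_{t-1})+\alpha^{2}Var(X_{t-1})$, and then solve the fixed-point equations given by stationarity. The only cosmetic difference is that you derive these identities via Wald's equation and the law of total variance, whereas the paper imports them (in second-moment form) by citing Silva's properties of thinning operators.
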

 \begin{proof}
According with the properties of the thinning operators which are presented in \cite{Silva}, we have
\begin{eqnarray*}
E(X_t )&=& E(\alpha o X_{t-1} +W_t ) = E(\alpha o X_{t-1})+E(W_t)\\
&=&\alpha E(X_{t-1})+E(W_t)= \dfrac{\mu_{W_t}}{1-\alpha } = \dfrac{\theta + 2}{\theta (\theta +1)(1-\alpha )} , \\\\
Var(X_t )& =&Var (\alpha o X_{t-1} +W_t ) = Var (\alpha o X_{t-1}) +Var(W_t )\\
&=& E(\alpha o X_{t-1})^2- E^2(\alpha o X_{t-1})+Var(W_t)\\
&=&\alpha ^2 E(X_{t-1})^2+\delta E(X_{t-1})-\alpha^2 E^2(X_{t-1})+Var(W_t)\\
&=& \alpha^2 Var(X_{t-1})+\delta(\dfrac{\mu_{W_t}}{1-\alpha})+\sigma^2_{W_t}\\
&=&\dfrac{(\dfrac{\delta}{1-\alpha}) \mu_{W_t} +\sigma^2_{W_t} }{1-\alpha^2 } =\dfrac{\delta( \theta + 2)}{\theta (\theta +1)(1-\alpha)(1-\alpha^2 )} + \dfrac{\theta^3 +4\theta^2 +6\theta +2}{\theta^2 (\theta +1 )^2  (1-\alpha^2)}.\\
\end{eqnarray*}
where $ \mu_{W_t}$ and $\sigma^2_{W_t}$ are the mean and variance of $W_t$'s and $\delta $ is the variance of $Y_i$.
 \end{proof}
 Since in the proposed model variance is greater than mean, the model can also be used for over-dispersed count data modelling.
\subsubsection{Autocovariance and autocorrelation functions}
 Autocovariance and autocorrelation functions for model \eqref{f1} are given respectively by
 \begin{eqnarray*}
 \gamma_k = Cov(X_{t-k},X_t ) &=& Cov(\alpha oX_{t-1}+W_t  ,X_{t-k} )\\
 &=&Cov(\alpha oX_{t-1},X_{t-k} )+Cov(W_t ,X_{t-k} ) \\
 &=& Cov(\alpha oX_{t-1},X_{t-k} )=\alpha \gamma_{k-1}=\alpha ^k \gamma_0 ,
 \end{eqnarray*}
 and $\rho_k = \frac{\gamma_k }{\gamma_0} = \frac{a^k \gamma_0}{\gamma_0} = \alpha ^k $.

 \subsubsection{  Conditional mean and conditional variance }
 The conditional mean and  conditional variance of model \eqref{f1}  are given respectively by
 \begin{eqnarray*}
E(X_{t+1} | X_t )&=&E(\alpha o X_t+W_{t+1} |X_t)\\
&=&E(\alpha o X_t |X_t)+E(W_{t+1}|X_t)=\alpha X_t +\mu_{W_t } ,
 \end{eqnarray*}
and
 \begin{eqnarray*}
Var(X_{t+1}|X_t ) &=& Var(\alpha oX_t+W_{t+1}|X_t ) \\
 &=&  Var(\alpha oX_t|X_t )+Var(W_t|X_t )=\delta X_t+\sigma ^2_{W_t}.
\end{eqnarray*}

\section{Estimation of model parameters}
 Suppose that $X_1,..., X_T , T \in N $ as the time series data, are given. The parameters $\alpha$ and $\theta$ are estimated by the following three methods. To estimate parameter $\theta$, we use the auxiliary parameter $\mu$.

 \subsection{Conditional least squares method (CLS)}
 Conditional least squares estimators of parameters $\alpha$ and $\mu$ for model \eqref{f3} is obtained by minimizing the function
 $$S_n (\alpha,\mu)=\sum_{t=2}^n (X_t -E(X_t |X_{t-1} ))^2 = \sum_{t=2}^n (X_t - \alpha X_{t-1}-(1-\alpha) \mu )^2 ,$$
 where $\mu =E(X_t )$. Thus the conditional least squares estimator of parameters $\alpha$ and $\mu$ for model \eqref{f3} are given as follow,

 $$\hat{\alpha}_{cls} = \frac{(T-1) \sum_{t=2}^T X_t X_{t-1} -\sum_{t=2}^T X_t  \sum_{t=2}^T X_{t-1} }{(T-1)\sum_{t=2}^T X_{t-1}^2 - ( \sum_{t=2}^T X_{t-1} ) }, $$
 $$\hat{\mu}_{cls} = \frac{ \sum_{t=2}^T X_t - \hat{\alpha}_{cls} \sum_{t=2}^T X_{t-1} }{(1-\hat{\alpha}_{cls})(T-1)}. $$
 Also the estimator for $\theta$ is obtained by solving the equation
 $$\hat{\mu}_{cls} = \frac{  \theta + 2 }{\theta(\theta + 1)(1-\hat{\alpha}_{cls})}. $$

 So, $\hat{\theta }_{cls}$ is given by
 $$\hat{\theta }_{cls} = \frac{(1-(1-\hat{\alpha}_{cls})\hat{\mu}_{cls})+ \sqrt{((1-\hat{\alpha}_{cls})\hat{\mu}_{cls} -1)^2 + 8(1-\hat{\alpha}_{cls}) \hat{\mu}_{cls}} }{2(1-\hat{\alpha}_{cls})\hat{\mu}_{cls}}.$$
 \begin{thm} \label{thm1}
The estimators $\hat{\alpha}_{cls}$ and $\hat{\theta}_{cls}$  are strongly consistent for estimating $\alpha$ and $\theta$, respectively, and satisfy the asymptotic normality
\begin{eqnarray}
\sqrt{n}\binom{\hat{\alpha}_{cls}-\alpha}{\hat{\theta}_{cls}-\theta}\to^d N
\begin{pmatrix}
0, c^2 A \\
\end{pmatrix},
\end{eqnarray}
where
\begin{eqnarray*}
A=
\begin{pmatrix}
 r_{11} & r_{12} \\
  r_{21} & r_{22}
\end{pmatrix},
\end{eqnarray*}
and
\begin{eqnarray*}
c &=& \theta^2 (\theta+1)^2[(\mu_2-\mu^2_1)(\theta^2+4\theta+2)]^{-1},
\end{eqnarray*}
\begin{eqnarray*}
 r_{11}&=&\frac{(\theta^2+4\theta+2)^2}{\theta^4(\theta+1)^4}[(\delta\mu_3+\mu_2\sigma ^2_{W_t})-\mu_1(\delta\mu_2+\mu_1\sigma ^2_{W_t})\\
&&+\mu_1(\mu_1(\delta\mu_1+\sigma ^2_{W_t})-(\delta\mu_2+\mu_1\sigma ^2_{W_t}))],
\end{eqnarray*}
\begin{eqnarray*}
 r_{12}&=& r_{21}=\frac{(\theta^2+4\theta+2)}{\theta^2(\theta+1)^2}[\mu_1(\delta\mu_3+\sigma ^2_{W_t}\mu_2)-\mu_2(\delta\mu_2+\mu_1\sigma ^2_{W_t})\\
&&+\mu_1\mu_2(\delta\mu_1+\sigma ^2_{W_t})-\mu^2_1(\delta\mu_2+\mu_1\sigma ^2_{W_t})],
\end{eqnarray*}
\begin{eqnarray*}
r_{22}&=& \mu^2_1(\delta\mu_3+\sigma ^2_{W_t}\mu_2)-2\mu_1\mu_2(\delta\mu_2+\mu_1\sigma ^2_{W_t})+\mu^2_2(\delta\mu_1+\sigma ^2_{W_t}),
\end{eqnarray*}
$E(X_t^r)=\mu_r,~~~ r=1,2,3$
\end{thm}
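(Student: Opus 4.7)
The plan is to apply the classical conditional least-squares framework of Klimko and Nelson (1978), which delivers strong consistency and asymptotic normality of CLS estimators for stationary ergodic Markov chains under moment and regularity conditions that the PSINARPL(1) model will be shown to satisfy. First I would verify the prerequisites: with $\alpha\in(0,1)$ and Poisson-Lindley innovations having finite moments of all orders, the process $\{X_t\}$ defined by \eqref{f3} is an irreducible aperiodic Markov chain on the non-negative integers with a drift condition $E(X_t\mid X_{t-1})=\alpha X_{t-1}+\mu_{W_t}$ that yields geometric ergodicity. Consequently $\{X_t\}$ admits a unique stationary distribution, is strictly stationary and ergodic from that distribution, and the moments $\mu_r=E(X_t^r)$ for $r=1,2,3$ are finite.

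Second I would carry out the CLS analysis in the natural parametrization $(\alpha,\mu)$ with conditional-mean function $g(X_{t-1};\alpha,\mu)=\alpha X_{t-1}+(1-\alpha)\mu$, whose gradient is $\nabla g = (X_{t-1}-\mu,\;1-\alpha)^T$. Using the conditional variance $\mathrm{Var}(X_t\mid X_{t-1})=\delta X_{t-1}+\sigma_{W_t}^2$ already computed in Section 2, Klimko-Nelson gives
\begin{equation*}
\sqrt{n}\begin{pmatrix}\hat\alpha_{cls}-\alpha\\ \hat\mu_{cls}-\mu\end{pmatrix}\;\to^d\; N\!\bigl(0,\,V^{-1}W V^{-1}\bigr),
\end{equation*}
where $V=E[\nabla g\,\nabla g^T]$ and $W=E\bigl[(\delta X_{t-1}+\sigma^2_{W_t})\,\nabla g\,\nabla g^T\bigr]$. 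A direct computation gives $\det V=(1-\alpha)^2(\mu_2-\mu_1^2)$, while the entries of $W$ are polynomials in $\mu_1,\mu_2,\mu_3,\delta,\sigma^2_{W_t}$ that match the bracketed expressions appearing in $r_{11},r_{12},r_{22}$.

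Third I would pass to the $\theta$-parametrization by the delta method. The relation $\mu=(\theta+2)/[\theta(\theta+1)(1-\alpha)]$ defines a smooth inverse $\theta=h(\alpha,\mu)$ (the explicit root given in the display above Theorem \ref{thm1}), whose partial derivatives compute to
\begin{equation*}
\partial_\mu h = -\frac{\theta^2(\theta+1)^2}{(\theta^2+4\theta+2)(1-\alpha)^{-1}(\cdot)}\,,\qquad \partial_\alpha h = \frac{\mu}{1-\alpha}\,\partial_\mu h .
\end{equation*}
Applying the Jacobian of $(\alpha,\mu)\mapsto(\alpha,h(\alpha,\mu))$ to the limiting covariance of $(\hat\alpha_{cls},\hat\mu_{cls})$ produces the scalar factor $c=\theta^2(\theta+1)^2/[(\mu_2-\mu_1^2)(\theta^2+4\theta+2)]$ (the $(\mu_2-\mu_1^2)^{-1}$ comes from inverting $V$, and the $\theta^2(\theta+1)^2/(\theta^2+4\theta+2)$ comes from $\partial_\mu h$), squared on the diagonal and mixed on the off-diagonal, yielding the matrix $c^2 A$ exactly as stated. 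Strong consistency of $\hat\alpha_{cls},\hat\mu_{cls}$ follows from the first part of Klimko-Nelson together with identifiability, and $\hat\theta_{cls}=h(\hat\alpha_{cls},\hat\mu_{cls})$ inherits strong consistency by the continuous-mapping theorem.

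The main obstacle will be the algebraic bookkeeping in the $W$-matrix and the delta-method step: one must track how the factors of $(1-\alpha)$ coming from $\nabla g$ and from $V^{-1}$ interact with the factors of $(1-\alpha)^{-1}$ appearing in $\partial h/\partial\mu$ and $\partial h/\partial\alpha$, and keep the moments $\mu_1,\mu_2,\mu_3$ carefully separated from the auxiliary mean parameter $\mu=\mu_1$. Once these cancellations are executed, the three bracketed polynomials in the statement of $r_{11},r_{12},r_{22}$ should drop out directly from the expansion of $V^{-1}WV^{-1}$ conjugated by the Jacobian of the reparametrization.
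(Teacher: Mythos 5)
Your proposal follows exactly the route the paper intends: the paper's ``proof'' is a one-line deferral to Theorem 3.1 of L\'{i}vio et al.\ \cite{Lívio}, which is precisely the Klimko--Nelson CLS argument (consistency and asymptotic normality in the $(\alpha,\mu)$ parametrization with conditional variance $f_{t|t-1}=\delta X_{t-1}+\sigma^2_{W_t}$, followed by the delta method for $\theta$) that you reconstruct. Your sketch is in fact more explicit than what the paper provides, and the $r_{11}$ verification you outline checks out against the stated covariance entries.
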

\begin{proof}
 The proof is similar to proof of Theorem 3.1 of L\'{i}vio et al. \cite{Lívio}, using this fact that in the proof we will consider $f_{t|t-1}=\delta X_{t}+\sigma ^2_{W_t}$. So the proof is omitted.
\end{proof}
 \subsection{Yule-Walker estimation}
 Since $\mu =E(X_t )$ and $\alpha =\frac{\gamma (1)}{\gamma (0)}$, then in model \eqref{f3}, Yule-Walker estimators of $\alpha$ and  $\mu$ are obtained as follows
 $$\hat{\mu}_{YW} = \bar{X}_T= \frac{1}{T} \sum_{t=1}^T X_t , $$
 $$\hat{\alpha }_{YW} = \frac{\hat{\gamma} (1)}{\hat{\gamma} (0)} = \frac{\sum_{t=2}^T (X_t - \bar{X}_t)((X_{t-1} - \bar{X}_t) }{\sum_{t=1}^T (X_t - \bar{X}_t)^2 }.$$
 The Yule-Walker estimator of $\theta$ is given by
  $$\hat{\theta }_{YW} = \frac{(1-(1-\hat{\alpha}_{YW})\hat{\mu}_{YW})+ \sqrt{((1-\hat{\alpha}_{YW})\hat{\mu}_{YW} -1)^2 + 8(1-\hat{\alpha}_{YW}) \hat{\mu}_{YW}} }{2(1-\hat{\alpha}_{YW})\hat{\mu}_{YW}}.$$

   \subsection{Maximum likelihood estimation method}
 Maximum likelihood estimators of $\alpha$ and $\theta$ are obtained by maximizing the likelihood function
 $$L(\theta ,\alpha|\mathbf{x})=f(x_1 ,..., x_n )=f(x_1 )f(x_2|x_1 )...f(x_n|x_{n-1} ).$$
 Because in general case, obtaining the marginal distribution of $X_1$ is hard, a simple method to find the likelihood function is that we condition on variable $X_1$, such that
\begin{eqnarray*}
 f(x_1 ,...,x_n |x_1 ) &=& \frac{f(x_1 ,...,x_n )}{f(x_1 )}.
\end{eqnarray*}
Thus the maximum likelihood estimators of $\alpha$ and $\theta$ in the  model are obtained by maximizing the conditional likelihood function that, in general, no closed form for the conditional maximum likelihood estimates.

\section{Special cases of PSINARPL(1)}
In this section, some special cases of  PSINARPL(1) process are studied and some properties of the model are obtained.
\subsection{Construction the model based on binomial thinning operator}
\begin{dfn} \label{defn1.4} (BINARPL(1)) \\
The first-order integer-valued autoregressive model with Poisson-Lindley innovations based on binomial thinning operator is defined as follows
\begin{equation} \label{f6}
X_t = \alpha o X_{t-1} + W_t , \qquad \qquad t \geq 1,
\end{equation}
where $\alpha \in (0,1)$, $\alpha o X_{t-1}=\sum_{i=1}^{X_{t-1}} Y_i $  where ${Y_i}$ are iid Bernoulli distribution with probability $\alpha$ and $W_t\sim PL(\alpha,\theta)$ are iid and independent from $Y_i$'s and also from $X_{t-l}$ for $t \geq l$.
\end{dfn}
The binomial thinning operator has been introduced and used by many researches such as; Steutel and van Harn \cite{Steutel}, Alzaid and Al-Osh \cite{Alzaid}, Aghababaei Jazi et al. \cite{Aghababaei2012b} and Mohammadpour et al. \cite{Mohammadpour}. For the last work in this field see L\'{i}vio et al.\cite{Lívio}.\\
Since model BINARPL(1) is a special case of model PSINARPL(1), one can obtain different properties of this model using the general results presented in Sections 2 \& 3. By assuming $Y_i\sim Ber(\alpha)$ and letting $\delta=\alpha(1-\alpha)$, the mean and variance of ${X_t}$ are given by
\begin{eqnarray*}
E(X_t )&=& \dfrac{\theta + 2}{\theta (\theta +1)(1-\alpha )},\\
Var(X_t )& =& \dfrac{\alpha ( \theta + 2)}{\theta (\theta +1)(1-\alpha^2 )} + \dfrac{\theta^3 +4\theta^2 +6\theta +2}{\theta^2 (\theta +1 )^2  (1-\alpha^2)}.
\end{eqnarray*}
Also the conditional expectation and the conditional variance are given by
\begin{eqnarray*}
E(X_{t+1} | X_t )&=&\alpha X_t +\frac{\theta + 2}{\theta (\theta + 1)},\\
 Var(X_{t+1}|X_t )&=&\alpha(1-\alpha) X_t+\frac{\theta^3 + 4 \theta^2 + 6\theta + 2}{\theta^2 (\theta + 1)^2}.
\end{eqnarray*}
Given that PSINARPL(1) is Markov process, thus the transition probabilities are given by
 \begin{eqnarray*}
 P_{lk}&=& P(X_t =k |X_{t-1}=l)=P(\alpha oX_{t-1}+W_t =k|X_{t-1}=l)  \\
 &=& \sum_{m} P(\alpha oX_{t-1}=m|X_{t-1}=l) P(W_t =k-m).
\end{eqnarray*}
Because $T=\{0,1,\cdots,l\}$ for fixed $ l \in Z^{+}$ then the inequalities $0 \le m \le l$ and $k-m\ge 0$  implies $ 0\le m \le \min(l,k)$, so we have
  \begin{eqnarray*}
  P_{lk}&=& \sum_{m=0}^{\min(l,k)} P(\alpha oX_{t-1}=m|X_{t-1}=l) P(W_t =k-m) \\
 &=& \sum_{m=0}^{\min(l,k)}  {{l}\choose{m}}  \alpha^m (1-\alpha)^{l-m} \left[\frac{\theta^2 ((k-m)+\theta +2)}{(\theta+1)^{k-m+3} } I_{ \{ 0,1,... \} } (k-m)\right].
 \end{eqnarray*}
Using the Markov property, the joint probability distribution function is obtained as
  \begin{eqnarray*}
f(j_1 ,...,j_n ) &=& P(X_1=j_1 )P(X_2 =j_2 |X_1=j_1 ) ... P(X_n=j_n |X_{n-1} = j_{n-1)} )  \\
 &=& P_{j1} \prod_{t=1}^{n-1} \sum_{m=0}^{\min(j_t,j_{t+1})} {{j_t}\choose{m}}  \alpha^m (1-\alpha)^{j_t -m} \\
&& \times \left [\frac{\theta^2 (j_{t+1}-m+\theta +2)}{(\theta +1)^{j_{t+1}-m+3}}  I_{ \{0,1,... \} } (j_{t+1}-m)\right].
 \end{eqnarray*}
 Also the marginal distribution is calculated as
  \begin{eqnarray*}
 P_k &=& P(X_t =k) = \sum_{l=0}^{\infty}P_{lk}  P(X_{t-1}=l) \\
 &=&   \sum_{l=0}^{\infty} \sum_{m=0}^{\min(l,k)} {{l}\choose{m}}  \alpha^l (1-\alpha )^{l-m} \left[ \frac{\theta^2 (k-m+\theta +2)}{(\theta +1)^{k-m+3} }   I_{ \{0,1,... \} } (k-m) \right]  P_l.
 \end{eqnarray*}
According with the previous section, the CLS and YW estimators of parameters $\alpha$ and $\theta$ for the BINARPL(1) model can be obtained.
Also the MLE of parameters $\alpha$ and $\theta$ of BINARPL(1) are obtained by maximizing the following conditional likelihood function,
\begin{eqnarray*}
 f(x_1 ,...,x_n |x_1 ) &=& \frac{f(x_1 ,...,x_n )}{f(x_1 )} \\
 &=&  \prod_{i=1}^{n-1} \sum_{m=0}^{\min(x_i ,x_{i+1})} {{x_i}\choose{m}}  \alpha^m (1-\alpha)^{x_i-m} \\
&& \times \left[ \frac{\theta^2 (x_{i+1} -m+\theta +2)}{ (\theta +1)^{x_{i+1}-m+3} } I_{ \{ 0,1,... \} }  (x_{i+1}-m)    \right].
\end{eqnarray*}

\subsection{Construction of the model based on  negative binomial thinning operator}
\begin{dfn} \label{defn1.2} (NBINARPL(1)) \\
The first-order integer-valued autoregressive model with Poisson-Lindley innovations based on negative binomial thinning operator is defined as follows
\begin{equation} \label{f6}
X_t = \alpha o X_{t-1} + W_t , \qquad \qquad t \geq 1,
\end{equation}
where $\alpha \in (0,1)$, $\alpha o X_{t-1}=\sum_{i=1}^{X_{t-1}} Y_i $  where ${Y_i}$ are iid geometric distribution with probability mass function $P(Y_i=y)=\frac{\alpha^y}{(1+\alpha)^{y+1}}$ and $W_t\sim PL(\alpha,\theta)$ are iid and independent from $Y_i$'s and also from $X_{t-l}$ for $t \geq l$.
\end{dfn}
 Rist\'{i}c et al. \cite{Rist2009} introduced a new geometric first-order integer-valued autoregressive (NGINAR(1)) process and a combined geometric INAR(p) model based on negative binomial thinning operator is proposed by Nast\'{i}c et al. \cite{Nasti}. Also one can see Janjic et al. \cite{Janjic} for more properties about the binomial and negative binomial thinning operators.
One can obtain
different properties of NBINARPL(1) model using the general results presented in Sections 2 \&
3. By assuming $Y_i\sim G(1/(1+\alpha))$ and letting $\delta=\alpha(1+\alpha)$, the mean and variance of ${X_t}$ are given by
\begin{eqnarray*}
E(X_t )&=& \dfrac{\theta + 2}{\theta (\theta +1)(1-\alpha )}, \\
Var(X_t ) &=& \dfrac{\alpha ( \theta + 2)}{\theta (\theta +1)(1-\alpha )^2} + \dfrac{\theta^3 +4\theta^2 +6\theta +2}{\theta^2 (\theta +1 )^2  (1-\alpha^2)}.
\end{eqnarray*}

Also the conditional expectation and the conditional variance are given by
 \begin{eqnarray*}
E(X_{t+1} | X_t )&=&\alpha X_t +\frac{\theta + 2}{\theta (\theta + 1)},\\
 Var(X_{t+1}|X_t )&=&\alpha(1+\alpha) X_t+\frac{\theta^3 + 4 \theta^2 + 6\theta + 2}{\theta^2 (\theta + 1)^2}.
\end{eqnarray*}
Transition probabilities of the NBINARPL(1) model are given by
 \begin{eqnarray*}
P_{lk} &=& \sum_{m=0}^k {{l+m-1}\choose{m}} \left( \frac{1}{1+\alpha} \right) ^l \left( \frac{\alpha}{1+ \alpha }   \right) ^m  \left[  \frac{\theta^2 (k-m+\theta +2)}{(\theta +1)^{k-m+3}}  I_{ \{ 0,1,... \} } (k-m)  \right] \\
 & &\times I(l \neq 0) + \left[  \frac{ \theta^2 (k+\theta +2)}{(\theta +1)^{k+3}}  I_{ \{ 0,1,... \} } (k)  \right] I(l = 0).
 \end{eqnarray*}
 The joint probability distribution function is given by
  \begin{eqnarray*}
 f(j_1 ,...,j_n ) &=& P_{j1} \prod_{t=1}^{n-1} \Big( \sum_{m=0}^{j_{t+1}} {{j_t +m-1}\choose{m}}   \left( \frac{1}{1+\alpha }  \right) ^{j_t} \left( \frac{\alpha }{1+\alpha }  \right)^m\\
 &&\times \left[ \frac{ \theta^2 ((j_{t+1}-m)+\theta +2)}{(\theta +1)^{j_{t+1}-m+3}} I_{ \{ 0,1,... \} } (j_{t+1} -m) \right] I(j_t \neq 0 ) \\
 &&+ \left[ \frac{\theta^2 ((j_{t+1} )+\theta +2)}{(\theta+1)^{j_{t+1}+3}} I_{ \{ 0,1,... \} } (j_{t+1} ) \right] I(j_t =0)\Big).
 \end{eqnarray*}
 The marginal distribution of NBINARPL(1) can be calculated as
   \begin{eqnarray*}
P_k &=& \sum_{l=0}^{\infty} \Big(\sum_{m=0}^k  {{l +m-1}\choose{m}}
 \left( \frac{1}{1+\alpha }  \right)^{l} \left( \frac{\alpha}{1+\alpha}  \right) ^m \\
 && \times \left[ \frac{ \theta^2 ((k-m)+\theta +2)}{(\theta +1)^{j_{k-m+3}}} I_{ \{ 0,1,... \} } (k-m) \right]  I(l \neq 0 ) \\
 &&   +\left[ \frac{\theta^2 (k+\theta +2)}{(\theta+1)^{k +3}} I_{ \{ 0,1,... \} } (k ) \right] I(l =0)\Big) P_l .
 \end{eqnarray*}

The MLE of parameters $\alpha$ and $\theta$ of NBINARPL(1) model are obtained by maximizing the following conditional likelihood functions;
\begin{eqnarray*}
f(x_1 ,...,x_n |x_1 ) &=& \prod_{i=1}^{n-1} \Big( \sum_{m=0}^{x_{i+1}} {{x_i +m -1}\choose{m}}  \left( \frac{1}{1 + \alpha }  \right) ^{x_i}   \left( \frac{\alpha }{1 + \alpha }  \right) ^{m}  \\
&& \times \left[ \frac{\theta^2 (x_{i+1} -m+\theta +2)}{ (\theta +1)^{x_{i+1}-m+3}} I_{ \{ 0,1,... \} }  (x_{i+1}-m) \right]  I(x_i \neq 0) \\
&&+ \left[ \frac{\theta^2 (x_{i+1} +\theta +2)}{ (\theta +1)^{x_{i+1}+3} } I_{ \{ 0,1,... \} }  (x_{i+1}) \right]  I(x_i = 0)\Big).
\end{eqnarray*}
\subsection{Construction of the model based on Poisson thinning operator}
\begin{dfn} \label{defn1.1} (PINARPL(1)) \\
The first-order integer-valued autoregressive model with Poisson-Lindley innovations based on Poisson thinning operator is defined as follows
\begin{equation} \label{f6}
X_t = \alpha o X_{t-1} + W_t , \qquad \qquad t \geq 1,
\end{equation}
where $\alpha \in (0,1)$, $\alpha o X_{t-1}=\sum_{i=1}^{X_{t-1}} Y_i $  where ${Y_i}$ are iid Poisson distribution with probability mass function $P(Y_i=y)=\frac{e^{-\alpha}\alpha^y}{y!}$ and $W_t\sim PL(\alpha,\theta)$ are iid and independent from $Y_i$'s and also from $X_{t-l}$ for $t \geq l$.
\end{dfn}
Different properties of PINARPL(1) model can be obtained using this fact that $Y_i\sim Pois(\alpha)$ and $\delta=\alpha$. The mean and variance of ${X_t}$ are given by
\begin{eqnarray*}
E(X_t )&=& \dfrac{\theta + 2}{\theta (\theta +1)(1-\alpha )}, \\
Var(X_t )&=&\dfrac{\alpha( \theta + 2)}{\theta (\theta +1)(1-\alpha)(1-\alpha^2 )} + \dfrac{\theta^3 +4\theta^2 +6\theta +2}{\theta^2 (\theta +1 )^2  (1-\alpha^2)}.
\end{eqnarray*}
The conditional expectation and the conditional variance of PINARPL(1) are given by
\begin{eqnarray*}
E(X_{t+1} | X_t )&=&\alpha X_t +\frac{\theta + 2}{\theta (\theta + 1)},\\
Var(X_{t+1}|X_t )&=&\alpha X_t+\frac{\theta^3 + 4 \theta^2 + 6\theta + 2}{\theta^2 (\theta + 1)^2}.
\end{eqnarray*}
Transition probabilities of the PINARPL(1) model are given by
 \begin{eqnarray*}
P_{lk} &=& \sum_{m=0}^k \frac{e^{-\alpha l}(\alpha l)^m}{m!}   \left[  \frac{\theta^2 (k-m+\theta +2)}{(\theta +1)^{k-m+3}}  I_{ \{ 0,1,... \} } (k-m)  \right] \\
 & &\times I(l \neq 0) + \left[  \frac{ \theta^2 (k+\theta +2)}{(\theta +1)^{k+3}}  I_{ \{ 0,1,... \} } (k)  \right] I(l = 0).
 \end{eqnarray*}
 The joint probability distribution function is given by
  \begin{eqnarray*}
 f(j_1 ,...,j_n )&=& P_{j1} \prod_{t=1}^{n-1} \Big( \sum_{m=0}^{j_{t+1}} \frac{e^{-\alpha j_t}(\alpha j_t)^m}{m!}\\
 &&\times \left[ \frac{ \theta^2 ((j_{t+1}-m)+\theta +2)}{(\theta +1)^{j_{t+1}-m+3}} I_{ \{ 0,1,... \} } (j_{t+1} -m) \right] I(j_t \neq 0 ) \\
 &&+ \left[ \frac{\theta^2 ((j_{t+1} )+\theta +2)}{(\theta+1)^{j_{t+1}+3}} I_{ \{ 0,1,... \} } (j_{t+1} ) \right] I(j_t =0)\Big).
 \end{eqnarray*}
 Also, the marginal distribution of this model can be calculated as
   \begin{eqnarray*}
P_k &=& \sum_{l=0}^{\infty} \Big(\sum_{m=0}^k  \frac{e^{-\alpha l}(\alpha l)^m}{m!}
 \times \left[ \frac{ \theta^2 ((k-m)+\theta +2)}{(\theta +1)^{j_{k-m+3}}} I_{ \{ 0,1,... \} } (k-m) \right]  I(l \neq 0 ) \\
 &&   +\left[ \frac{\theta^2 (k+\theta +2)}{(\theta+1)^{k +3}} I_{ \{ 0,1,... \} } (k ) \right] I(l =0)\Big) P_l .
 \end{eqnarray*}

To obtain the ML estimators of parameters $\alpha$ and $\theta$, we need to maximize the conditional likelihood function. For PINARPL(1) this function is given by
\begin{eqnarray*}
f(x_1 ,...,x_n |x_1 ) &= & \frac{f(x_1 ,...,x_n )}{f(x_1 )}\\
&=& \prod_{i=1}^{n-1} \Big( \sum_{m=0}^{x_{i+1}} \frac{e^{-\alpha x_i}(\alpha x_i)^m}{m!}\\
 &&\times \left[ \frac{\theta^2 (x_{i+1} -m+\theta +2)}{ (\theta +1)^{x_{i+1}-m+3}} I_{ \{ 0,1,... \} }  (x_{i+1}-m) \right]  I(x_i \neq 0) \\
&&+ \left[ \frac{\theta^2 (x_{i+1} +\theta +2)}{ (\theta +1)^{x_{i+1}+3} } I_{ \{ 0,1,... \} }  (x_{i+1}) \right]  I(x_i = 0)\Big).
\end{eqnarray*}

\section{Some numerical results}
In this section, for each three models, BINARPL(1), NBINARPL(1) and PINARPL(1), we produce 1000 samples of size T = 100, 200 and 300 and obtain the estimators of the parameters using three methods that are presented in the previous section, then we compare these estimators together.\\
The average estimators (AE), average bias (ABias) and average root mean square errors (RMSE) are reported in Tables \ref{j1}, \ref{j2} and \ref{j3}. In each three  proposed models estimators converge to the true value and also the RMSE decreases when sample size increases.\\
In two sub-models, BINARPL(1) and NBINARPL(1), RMSE of the maximum likelihood estimators are less than the RMSE of  the CLS and YW estimators. In PINARPL(1) model, RMSE of the CLS and YW estimators of parameter $\alpha$  are less than the RMSE of the  ML estimator while the RMSE of the  ML estimator of parameter $\theta $ is less than the RMSE of the CLS and YW estimators.
\begin{table}[h]
\begin{center}
{\scriptsize
\caption{\footnotesize{ CLS, YW and ML estimators of $\alpha$ and $\theta$  for BINARPL(1) } }\label{j1}
\begin{tabular}{ccccccc} \hline
     Sample size    & $ \hat{\alpha }_{BCLS}$ & $  \hat{\theta }_{BCLS} $ & $  \hat{\alpha }_{SYW} $ & $  \hat{\theta }_{SYW} $ & $  \hat{ \alpha}_{BML}$ & $  \hat{\theta }_{BML} $\\
\hline
True value   $\alpha  = 0.2$ and   $\theta = 0.6$  & & & & & &  \\
T=100  & & & & & &  \\
  AEs &  0.1704        &    0.5826   &         0.1687     &         0.5877     &         0.2040      & 0.6099  \\
 ABias & -0.0295  &  -0.0174   &  -0.0313   &    -0.0122  & 0.0040    &  0.0099 \\
 RMSE & 0.1017       &    0.0802       &      0.1013         &    0.0799     &      0.0582        &      0.0692\\\hline
 T=200  & & & & & & \\
AEs       &    0.1853   &          0.5888 &             0.1845    &       0.5915    &       0.1994   &           0.6026\\
ABias    &    -0.0146      &       -0.0111     &       -0.0155     &     -0.0085      &   -0.0006   &          0.0026\\
RMSE    &     0.0739       &       0.0591    &         0.0738       &     0.0590     &      0.0415    &        0.0474\\\hline
T=300 & & & & & & \\
AEs     &   0.1831       &         0.5867   &         0.1825       &    0.5884     &     0.2011     &        0.6047 \\
ABias    &-0.0169      &         -0.0133     &     -0.0175       &   -0.0116     &    0.0011     &       0.0047 \\
RMSE   & 0.0596    &           0.0473      &      0.0596         &   0.0470    &    0.0361         &   0.0403\\\hline\hline
True value $\alpha=0.5$ and $\theta =1 $& & & & & & \\
T=100 & & & & & & \\
AEs      &      0.4483           &  0.9227&             0.4435 &         0.9308 &          0.4980  &           1.0077\\
ABias      &   -0.0517        &    -0.0772 &           -0.0565  &         -0.0692&         -0.0020  &           0.0077\\
RMSE        & 0.1089       &      0.1732    &          0.1106     &        0.1713  &       0.0542     &         0.1349\\\hline
T=200 & & & & & & \\
AEs         &   0.4591          &     0.9270    &         0.4570     &       0.9312    &      0.4999      &        1.0138\\
ABias  &      -0.0409        &      -0.0730      &       -0.0430      &     -0.0687     &     -0.0001      &      0.0138\\
RMSE    &     0.0779      &        0.1315         &     0.0789          &  0.1300         &     0.0395    &       0.0949\\\hline
T=300  & & & & & & \\
AEs    &    0.4629       &         0.9259       &         0.4612      &      0.9284    &  0.5010     &     1.0077\\
ABias    & -0.0371    &           -0.0741        &      -0.0388        &   -0.0716    &        0.0010 &          0.0077\\
RMSE    & 0.0674   &            0.1178            &     0.0683           & 0.1165       &      0.0313   &       0.0799\\\hline\hline
True value $\alpha=0.9$  and $\theta =2$ & & & & & & \\
T=100 & & & & & & \\
AEs  &     0.8603 &              1.7063                & 0.8504&               1.7235&            0.8974&          2.0404\\
ABias  &   -0.0397  &           -0.2937             &  -0.0495  &            -0.2764  &        -0.0026   &       0.0404\\
RMSE    &  0.0664    &         0.6176            &    0.0728      &          0.6245     &       0.0174     &     0.3322\\\hline
T=200  & & & & & & \\
AEs    &   0.8793   &           1.8186          &      0.8743  &               1.8230   &         0.8992 &       2.0273\\
ABias   & -0.0207     &       -0.1814        &      -0.0257     &           -0.1770      &    -0.0008     &         0.0273\\
RMSE    &0.0421        &     0.5074       &       0.0450          &       0.5118 &      0.0113              &   0.2319\\\hline
T=300  & & & & & & \\
AEs  & 0.8853  & 1.8613  &     0.8821      &   1.8657  &    0.8995   &     2.0232 \\
 ABias  &   -0.0147   &    -0.1387  &-0.0178    &  -0.1342   &  -0.0005   &     0.0232 \\
RMSE   & 0.0314    &0.4179     &  0.0330   &     0.4143  &    0.0095    &      0.1902 \\
\hline
\end{tabular}}
\end{center} \vspace{-2.5cm}
\end{table}

\begin{table}[h]
\begin{center}
{\scriptsize
\caption{\footnotesize{ CLS, YW and ML estimators of  $\alpha$ and $\theta$  for NBINARPL(1)  } }\label{j2}
\begin{tabular}{ccccccc} \hline
     Sample size    & $ \hat{\alpha }_{NBcls}$ & $  \hat{\theta }_{NBcls} $ & $  \hat{\alpha }_{NBYw} $ & $  \hat{\theta }_{NBYw} $ & $  \hat{ \alpha}_{NBML}$ & $  \hat{\theta }_{NBML} $
\\\hline
True value   $\alpha  = 0.2 $ and   $\theta = 0.6 $ & & & & & & \\
T=100   & & & & & & \\
 AEs  &   0.1718     &          0.5839  &           0.1702    &           0.5892 &         0.1993&         0.6084\\
 ABias  &   -0.0282     &         -0.0160 &          -0.0298&              -0.0108 &       -0.0007 &       0.0084\\
RMSE    &      0.1072   &   0.0831     &       0.1065&               0.0829       & 0.0761       &   0.0778\\\hline
T=200   & & & & & & \\
AEs   &    0.1840              &   0.5875 &             0.1832  &            0.5902      &     0.2019&          0.6056\\
ABias   &      -0.0160      &         -0.0124&              -0.0168&          -0.0098 &          0.0019&            0.0056\\
RMSE     &    0.07567   &            0.0610 &             0.0757    &        0.0609&            0.0535  &         0.0546  \\\hline
T=300    & & & & & & \\
AEs    &      0.1832    &                0.5847  &       0.1826&           0.5864     &          0.2017  &         0.6051\\
ABias    &    -0.0168&                 -0.0153    &     -0.0174 &          -0.0135     &       0.0017     &         0.0051\\
RMSE      & 0.0620&                   0.0482       &   0.0620     &       0.0478      &         0.0446      &      0.0427 \\\hline\hline
True value $\alpha$=0.5 and $\theta =1$  & & & & & & \\
T=100   & & & & & & \\
AEs&            0.4359&                0.8859&             0.4314       &      0.8936&              0.4793    &       1.0103\\
ABias&         -0.0641  &           -0.1141   &         -0.0686      &    -0.1064     &       -0.0207      &      0.0103\\
RMSE  &     0.1216       &         0.1823      &         0.1234   &         0.1785      &      0.0944    &          0.1765\\\hline
T=200   & & & & & & \\
AEs       &    0.4504 &          0.9022     &          0.4480             & 0.9059   &         0.4953     &       1.0077\\
ABias       & -0.0496   &        -0.0978     &         -0.0520         &    -0.0941   &       -0.0047    &       0.0077\\
RMSE  &      0.0899      &          0.1485    &           0.0910    &          0.1464   &        0.0641 &           0.1203\\\hline
T=300   & & & & & & \\
AEs       &     0.4584         &     0.9072       &        0.4567  &               0.9096   &       0.4922 &           1.0026\\
ABias       &  -0.0416      &      -0.0927         &     -0.0433&                -0.0904     &    -0.0078 &          0.0026\\
RMSE     &   0.0749       &      0.1312            &  0.0757 &  0.1297            & 0.0541 &          0.1021 \\\hline\hline
True value $\alpha=0.9$  and $\theta =2$  & & & & & & \\
T=100  & & & & & & \\
AEs &           0.8047      &        1.3007 &            0.7935   &            1.2959    &         0.8520  &             1.9061\\
ABias &        -0.0953   &         -0.6993  &           -0.1065    &         -0.7041  &         -0.0480     &          -0.0939\\
RMSE   &     0.1343   &          0.8027      &       0.1418          &      0.7972          &  0.0912         &      0.5500\\\hline
T=200  & & & & & & \\
AEs   &      0.8481       &        1.4356        &         0.8422   &         1.4312       &      0.8803          &    1.9568\\
ABias   &   -0.0519    &          -0.5644         &      -0.0578     &      -0.5687     &      -0.0197  &            -0.0432\\
RMSE     & 0.0788   &              0.6743          &     0.0824        &    0.6719         &   0.0488     &         0.3871\\\hline
T=300   & & & & & & \\
AEs  &          0.8604    &         1.4857 &        0.8571       &     1.4904             &0.8832    &        1.9543\\
ABias  &       -0.0396 &           -0.5143  &            -0.0430  &         -0.5096    &       -0.0168&     -0.0457\\
RMSE    &     0.0620&              0.6030    &           0.0646     &        0.6039 &           0.0391  &    0.3173\\
\hline
\end{tabular}}
\end{center} \vspace{-2.5cm}
\end{table}

\begin{table}[h]
\begin{center}
{\scriptsize
\caption{\footnotesize{ CLS, YW and ML estimators of $\alpha$ and $\theta$  for PINARPL(1) } }\label{j3}
\begin{tabular}{ccccccc} \hline
     Sample size    & $ \hat{\alpha }_{PCLS}$ & $  \hat{\theta }_{PCLS} $ & $  \hat{\alpha }_{PYW} $ & $  \hat{\theta }_{PYW} $ & $  \hat{ \alpha}_{PML}$ & $  \hat{\theta }_{PML} $\\
\hline
True value   $\alpha  = 0.2$ and   $\theta = 0.6$  & & & & & &  \\
T=100  & & & & & &  \\
  AEs &    0.1725      &  0.5844     &0.1708         & 0.5896           &     0.1312          &  0.6324 \\
 ABias &  -0.0275 &  -0.0156   &   -0.0292   &  -0.0104    &  -0.0688   & 0.0324 \\
 RMSE &0.1044      &    0.0820      &    0.1041        & 0.0820        &   0.0962           &  0.0774    \\\hline
 T=200  & & & & & & \\
AEs       & 0.1821    &    0.5880  &   0.1812    &     0.5906  &  0.1271  &   0.6239      \\
ABias    &  -0.0179   & -0.0119   &  -0.0188    &     -0.0094  &       -0.0729        &  0.0239       \\
RMSE    &  0.0756   &  0.0588      &  0.0754   &     0.0586     & 0.0863   &   0.0551     \\\hline
T=300 & & & & & & \\
AEs     &   0.1867     &      0.5892     &     0.1860   &   0.5909     &   0.1267  &   0.6227     \\
ABias    & -0.0132     &   -0.0108       &   -0.0139     &      -0.0091     &    -0.0733    &    0.0227    \\
RMSE   & 0.0607     &  0.0491&0.0606&0.0489&       0.0819         &            0.0456   \\\hline\hline
True value $\alpha=0.5$ and $\theta =1 $& & & & & & \\
T=100 & & & & & & \\
AEs      &   0.4426    &  0.9084     &     0.4384     &      0.9167     &   0.4424  &  1.0785    \\
ABias      & -0.0573    &    -0.0915    &     -0.0616   &          -0.0832   &   -0.0576                  &    0.0785   \\
RMSE        &    0.1175  & 0.1793     &       0.1188    &       0.1768   &   0.1012  &         0.1887        \\\hline
T=200 & & & & & & \\
AEs         & 0.4588      &      0.9172    &    0.4563  &    0.9212     &    0.4498      &       1.0761          \\
ABias  &   -0.0412     &        -0.0828        &       -0.0436       &        -0.0788     &    -0.0502          &   0.0761       \\
RMSE    &   0.0844      &    0.1395         &      0.0852           &      0.1376   &     0.0783      &      0.1387      \\\hline
T=300  & & & & & & \\
AEs    & 0.4595    &       0.9177    &      0.4579       &       0.9203    &   0.4504       &   1.0726     \\
ABias    &       -0.0405    &            -0.0823        &        -0.0421     &         -0.0797       &       -0.0496      &    0.0726        \\
RMSE    &    0.0729   &      0.1255         &      0.0738       &         0.1241       &     0.0700      &      0.1193            \\\hline\hline
True value $\alpha=0.9$  and $\theta =2$ & & & & & & \\
T=100 & & & & & & \\
AEs  &  0.8289    &  1.4464                &  0.8178        &       1.4307        &  0.6528          &   2.4360       \\
ABias  &   -0.0711  &       -0.5535               &  -0.0822 &      -0.5692        &       -0.2472    &   0.4360   \\
RMSE    &  0.1050   &       0.8810             &     0.1127     &       0.6997       &     0.9387       &  1.3185   \\\hline
T=200  & & & & & & \\
AEs    & 0.8572     &      1.5699              &  0.8520      &    1.5733             &       0.7295   &  2.4420     \\
ABias   &  -0.0428   &     -0.4301         &   -0.0479        &     -0.4267            &     -0.1705     &   0.4420      \\
RMSE    &   0.0659   &  0.5841        &      0.0695       &   0.5848     &         0.2041           & 0.5434   \\\hline
T=300  & & & & & & \\
AEs  & 0.8683  &             1.6127               &                 0.8649     &        1.6167    &   0.7427    &    2.4417 \\
 ABias  &   -0.0317                  &        -0.3873            &        -0.0350        &      -0.3830        &      -0.1573       &  0.4417   \\
RMSE   &      0.0505          &        0.5273             &         0.0530              &           0.5322         &      0.1818        &      0.5047     \\
\hline
\end{tabular}}
\end{center}  \vspace{-2.5cm}
\end{table}

 \section{Real data examples}
In this section, to compare the proposed three sub-models together and compare them with integer-valued AR(1) with Poisson innovations  based on binomial operator (INARP(1)) and integer-valued AR(1) with geometric innovations  based on binomial operator (INARG(1)), we apply four real time series data sets.

\subsection{The number of earthquakes per year magnitude 7.0 or greater}
The first example assumes the number of earthquakes per year magnitude 7.0 or greater (1900-1998). Time series plot, autocorrelation and partial autocorrelation functions are shown in Figure \ref{p11}. Sample mean, variance and autocorrelation are respectively, 20.02, 52.75 and 0.58.
 \begin{figure}[!h] \vspace{-0.5cm}
\centering
\includegraphics[height=11cm,width=15cm]{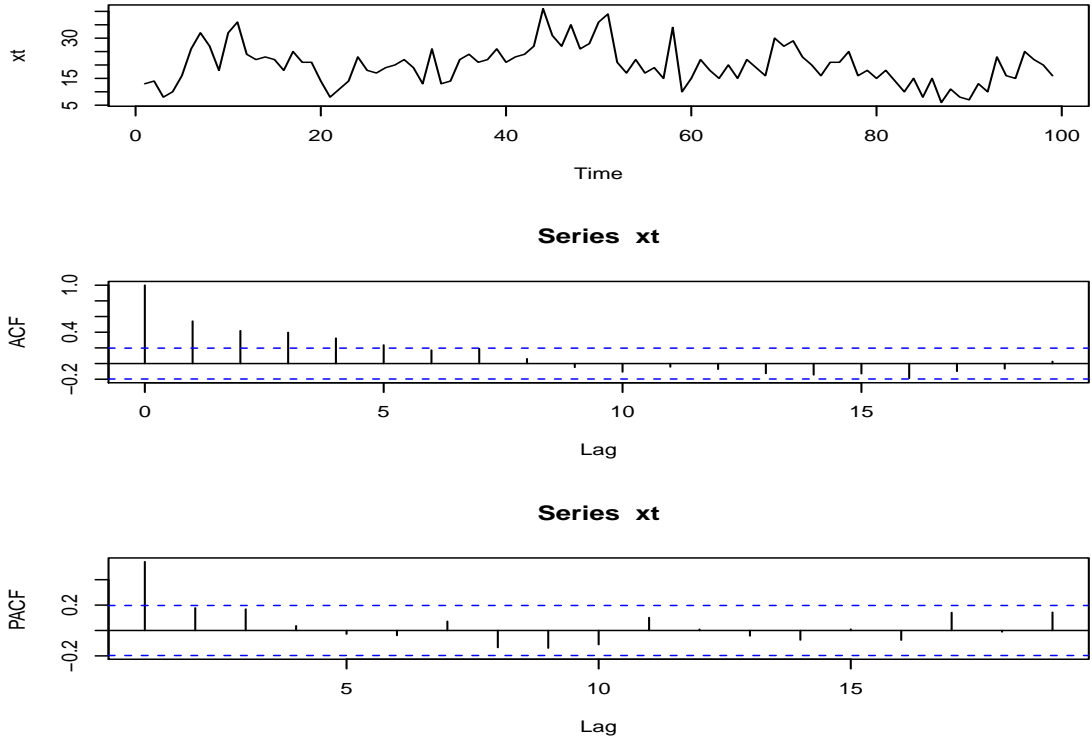} \vspace{-1cm}
\caption{\footnotesize{The time series, ACF and PACF plots of the number of earthquakes per year magnitude 7.0 or greater (1900-1998).}} \label{p11}\vspace{-0.3cm}
\end{figure}

Now, for data modelling, we compare four models. For each model, we calculate the MLE, CLS and YW of parameters, the Akaike information criterion (AIC) and Bayesian information criterion (BIC).
The results are shown in Table \ref{tab1}. According to Table \ref{tab1}, we see that the AIC and BIC of PINARPL(1) is smaller than the AIC and BIC of other models and hence the PINARPL(1) gives the best fit to this data in comparing with NBINARPL(1), BINARPL(1), INARP(1) and INARG(1). So, PINARPL(1) model with $W_t \sim PL(0.2878)$ innovations that gives by
$$X_t =0.694 o X_{t-1}+W_t,$$
is more appropriate for this data. The predicted values of the number of earthquakes per year magnitude 7.0 or greater series are given by
\begin{eqnarray*}
\hat{X}_1 &=& \frac{\hat{\theta} +2 }{\hat{\theta} (\hat{\theta} +1 )(1 - \hat{\alpha})} =20.187,\\
\hat{X}_i &=& \hat{\alpha} \hat{X}_{i-1} + \frac{\hat{\theta}  +2 }{\hat{\theta} (\hat{\theta} +1 )} =
 0.694 \hat{X}_{i-1} + 6.173 ,\quad\quad i=2,3,...,99.
\end{eqnarray*}

\begin{table}[!h]\vspace{-0.5cm}
\begin{center}
{\scriptsize
\caption{\footnotesize{Estimated parameters, AIC and BIC for the number of earthquakes per year magnitude 7.0 or greater.} }\label{tab1}
\begin{tabular}{ccccccc} \hline
Model &      CLS &     YW & MLE &  AIC &   BIC\\
\hline
  NBINARPL(1) &    $  \hat{\alpha } =0.5434 $ &   $  \hat{\alpha } =0.5417$ & $  \hat{\alpha }=0.7398 $   &   637.9338 & 643.1241        \\

 & $  \hat{\theta }  = 0.1969     $ &    $  \hat{\theta }=  0.1998$  & $  \hat{\theta }= 0.3330$ & & \\
&&&&& \\
BINARPL(1) & $  \hat{\alpha } =0.5434 $ &   $  \hat{\alpha } =0.5417$ & $  \hat{\alpha }=0.6099 $   &   642.9801& 648.1704         \\

 & $  \hat{\theta }  = 0.1969     $ &    $  \hat{\theta }=  0.1998$  & $  \hat{\theta }= 0.2304$ & & \\
&&&&& \\
PINARPL(1) & $  \hat{\alpha } =0.5434 $ &   $  \hat{\alpha } =0.5417$ & $  \hat{\alpha }=0.6942 $   &   636.1583& 641.3485        \\

 & $  \hat{\theta }  = 0.1969     $ &    $  \hat{\theta }=  0.1998$  & $  \hat{\theta }= 0.2878$ & & \\
 &&&&& \\
INARG(1) &  $  \hat{\alpha } =0.5434 $ &   $  \hat{\alpha } =0.5417$ & $  \hat{\alpha }=0.7398 $   &   637.9338 & 643.1241        \\

 & $  \hat{P }  = 0.0.097     $ &    $  \hat{P }=  0.1998$  & $  \hat{P }= 0.3330$ & & \\
&&&&& \\
INARP(1) &  $  \hat{\alpha } =0.5434 $ &   $  \hat{\alpha } =0.5417$ & $  \hat{\alpha }=0.3822 $   &   674.5856 & 679.7758        \\

 & $  \hat{\lambda }  = 9.323     $ &    $  \hat{\lambda }=  9.1746$  & $  \hat{\lambda }= 12.42$ & & \\
\hline
\end{tabular}}
\end{center}\vspace{-0.5cm}
\end{table}
Figure \ref{p12} shows the predicted values to the sample paths of earthquakes per year magnitude 7.0 or greater.

 \begin{figure}[!tph]
\centering
\includegraphics[height=9cm,width=9cm]{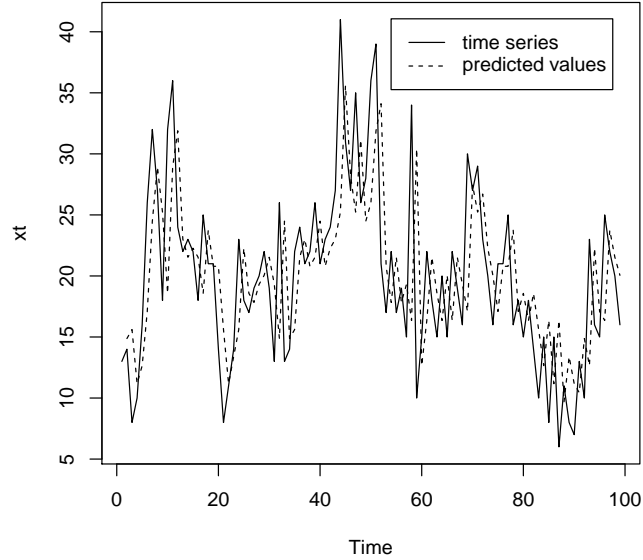} \vspace{-0.5cm}
\caption{\footnotesize{Predicted values and time series plot of earthquakes per year magnitude 7.0 or greater.}} \label{p12}
\end{figure}

\subsection{The number of measles cases by month and notifications rates}
The second example assumes the number of measles cases by month and notifications rates (cases per million) Aug 2013-Dec 2016 in Sweden. Time series plot, autocorrelation and partial autocorrelation functions are shown in Figure \ref{p21}. Sample mean, variance and autocorrelation are respectively, 1.244, 3.489 and 0.35.
 \begin{figure}[!h]
\centering
\includegraphics[height=11.5cm,width=15cm]{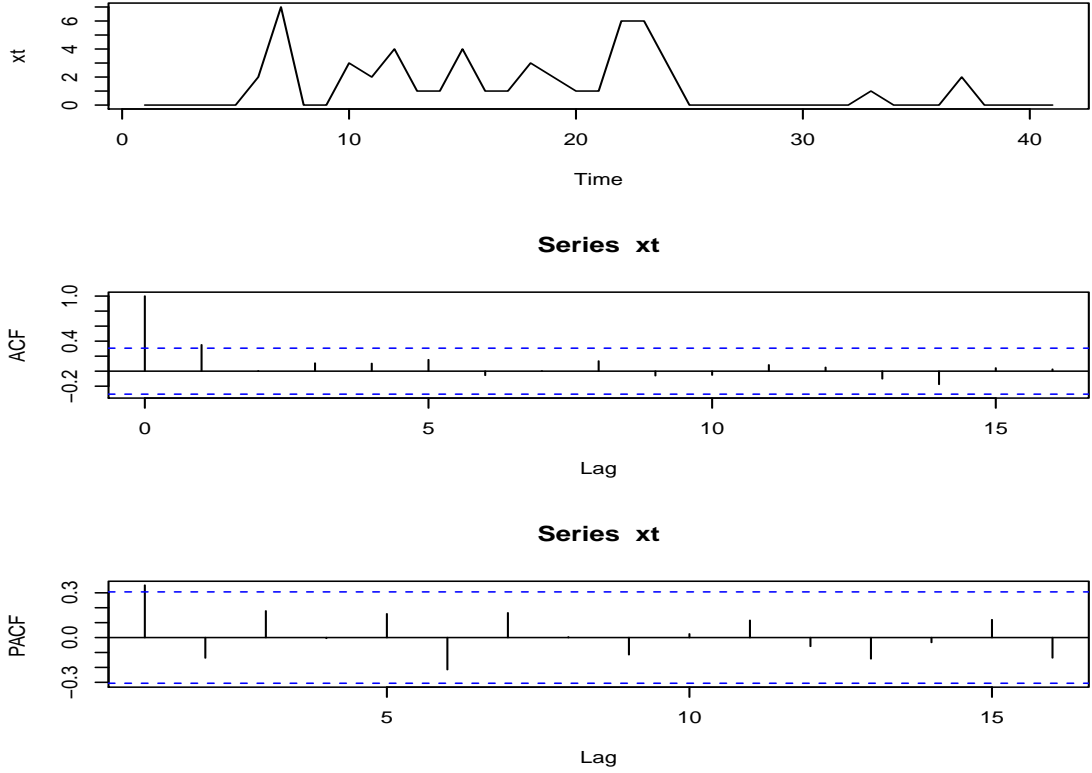} \vspace{-0.5cm}
\caption{\footnotesize{The time series, ACF and PACF plots of the number of measles cases by month and notifications rates (cases per million) Aug 2013-Dec 2016 in Sweden.}} \label{p21}
\end{figure}

The results are shown in Table  \ref{tab2}. According to Table  \ref{tab2}, we see that the AIC and BIC of NBINARPL(1) is smaller than the AIC and BIC of other models and hence, the NBINARPL(1) gives the best fit to this data in comparing with PINARPL(1), BINARPL(1), INARP(1) and INARG(1). So, NBINARPL(1) model with  $W_t \sim PL(2.349)$ innovations that gives by
$$X_t = 0.56 * X_{t-1} +W_t,$$
is more appropriate for this data. The predicted values of the number of measles cases are given by
\begin{eqnarray*}
\hat{X}_1 &=& \frac{\hat{\theta} +2 }{\hat{\theta} (\hat{\theta} +1 )(1 - \hat{\alpha})} =1.256,\\
\hat{X}_i &=& \hat{\alpha} \hat{X}_{i-1} + \frac{\hat{\theta}  +2 }{\hat{\theta} (\hat{\theta} +1 )} =
 0.56 \hat{X}_{i-1} + 0.55 ,\quad\quad i=2,3,...,41.
\end{eqnarray*}
Figure  \ref{p22} shows the predicted values and time series to the sample paths of measles cases.

 \begin{table}[!h]
\begin{center}
{\scriptsize
\caption{\footnotesize{Estimated parameters, AIC and BIC for the number of measles cases by month and notifications rates (cases per million).} }\label{tab2}
\begin{tabular}{ccccccc} \hline
Model &     CLS &     YW & MLE &  AIC &   BIC

\\
\hline
  NBINARPL(1) &    $  \hat{\alpha } =0.355 $ &   $  \hat{\alpha } =0.351$ & $  \hat{\alpha }=0.5631 $   &   122.764& 126.191 \\

 & $  \hat{\theta }  = 1.671     $ &    $  \hat{\theta }=  1.698 $  & $  \hat{\theta }= 2.3490$ & & \\
&&&&& \\
BINARPL(1) & $  \hat{\alpha } =0.355 $ &   $  \hat{\alpha } =0.351$ & $  \hat{\alpha }=0.2506 $   &  125.7234 & 129.1505         \\

 & $  \hat{\theta }  = 1.671     $ &    $  \hat{\theta }= 1.698 $  & $  \hat{\theta }= 1.4874$ & & \\
&&&&& \\
PINARPL(1) & $  \hat{\alpha } =0.355 $ &   $  \hat{\alpha } =0.351$ & $  \hat{\alpha }=0.3307 $   &  124.8738 & 128.3010 \\

 & $  \hat{\theta }  = 1.671     $ &    $  \hat{\theta }= 1.698 $  & $  \hat{\theta }= 1.6368$ & & \\
 &&&&& \\
INARG(1) &$  \hat{\alpha } =0.355 $ &   $  \hat{\alpha } =0.351$ & $  \hat{\alpha }=0.248 $   & 124.623   &  128.050        \\

 & $  \hat{P }  = 0.549     $ &    $  \hat{P }=  0.553 $  & $  \hat{P }= 0.510$ & & \\
&&&&& \\
INARP(1) & $  \hat{\alpha } =0.355 $ &   $  \hat{\alpha } =0.351$ & $  \hat{\alpha }=0.294  $   &     143.801   & 147.228      \\

 & $  \hat{\lambda }  =0.822     $ &    $  \hat{\lambda }=  0.807$  & $  \hat{\lambda }= 0.899$ & & \\
\hline
\end{tabular}}
\end{center}\vspace{-0.5cm}
\end{table}

\begin{figure}[!h]
\centering
\includegraphics[height=9cm,width=9cm]{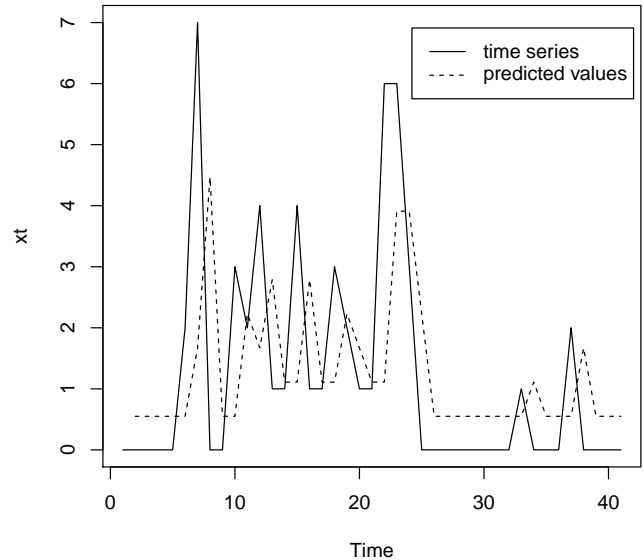} \vspace{-1cm}
\caption{\footnotesize{Predicted values and time series of measles cases.}} \label{p22}
\end{figure}
\newpage

\subsection{The numbers of Sudden death series}
This example assumes the numbers of submissions to animal health laboratories, monthly 2003-2009, from a region in New Zealand. The submissions can be categorized in various ways. Data set is Sudden death series and this data is used by Aghababaei Jazi et al. \cite{Aghababaei2012b}.\\
Time series plot, autocorrelation and partial autocorrelation functions are shown in Figure \ref{p31}. Sample mean, variance and autocorrelation are respectively, 2.0238, 6.529 and 0.59.
 \begin{figure}[!h] \vspace{-0.5cm}
\centering
\includegraphics[height=11cm,width=15cm]{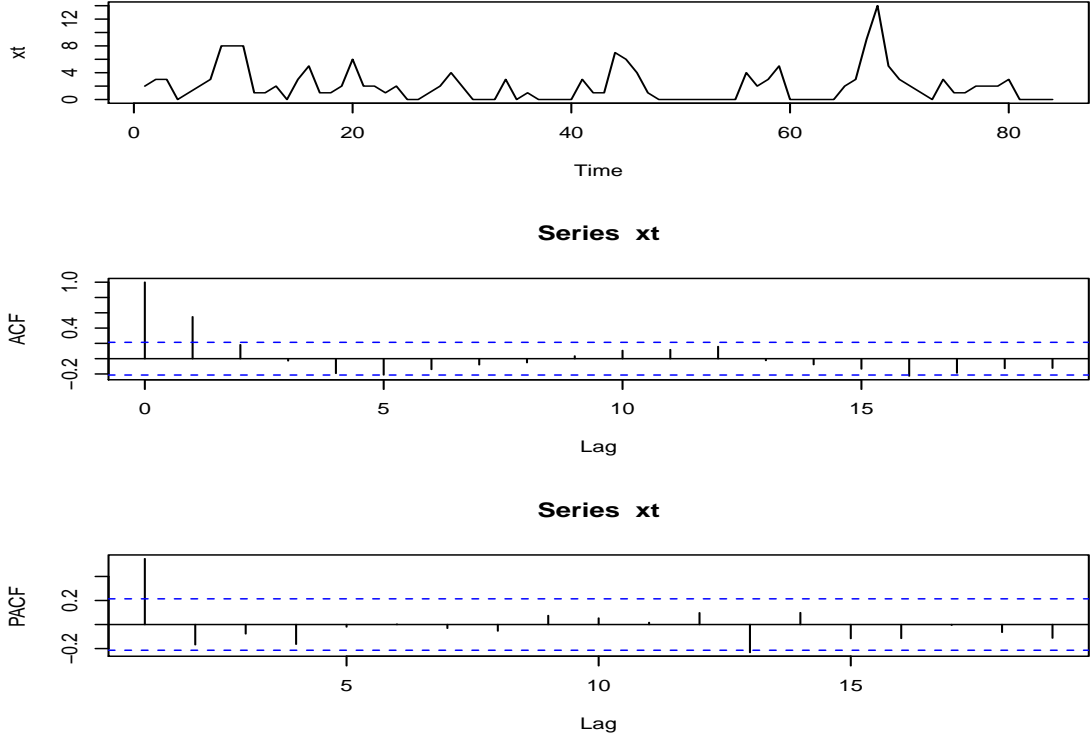} \vspace{-1.5cm}
\caption{\footnotesize{The time series, ACF and PACF plots of Sudden death submissions.}} \label{p31}\vspace{-0.5cm}
\end{figure}

Now, for data modelling, we compare four models.
The results are shown in Table  \ref{tab3}. According to Table  \ref{tab3}, we see that the AIC and BIC of NBINARPL(1) is smaller than the AIC and BIC of other models and hence, the NBINARPL(1) gives the best fit to this data in comparing with PINARPL(1), BINARPL(1), INARP(1) and INARG(1). So, NBINARPL(1) model with $W_t \sim PL(1.6850)$ innovations that gives by
$$X_t = 0.59 * X_{t-1} +W_t ,$$
is more appropriate for this data. The predicted values of the number of Sudden death series series are given by
\begin{eqnarray*}
\hat{X}_1 &=& \frac{\hat{\theta} +2 }{\hat{\theta} (\hat{\theta} +1 )(1 - \hat{\alpha})} =1.986,\\
\hat{X}_i &=& \hat{\alpha} \hat{X}_{i-1} + \frac{\hat{\theta}  +2 }{\hat{\theta} (\hat{\theta} +1 )} =
 0.59\hat{X}_{i-1} + 0.814 ,\quad\quad i=2,3,...,84.
\end{eqnarray*}

 \begin{table}[!h]\vspace{-0.5cm}
\begin{center}
{\scriptsize
\caption{\footnotesize{Estimated parameters, AIC and BIC for the Sudden death submissions.} }\label{tab3}
\begin{tabular}{ccccccc} \hline
Model &     CLS &     YW & MLE &  AIC &   BIC\\
\hline
  NBINARPL(1) &    $  \hat{\alpha } =0.5521 $ &   $  \hat{\alpha } =0.5478$ & $  \hat{\alpha }=0.5888  $   & 297.5909 & 302.4525 \\

 & $  \hat{\theta }  = 1.5221    $ &    $  \hat{\theta }=  1.5255 $  & $  \hat{\theta }= 1.6850$ & & \\
&&&&& \\
BINARPL(1) & $  \hat{\alpha } =0.5521 $ &   $  \hat{\alpha } =0.5478$ & $  \hat{\alpha }=0.3191$&  308.3543 &   313.2159 \\

 &$  \hat{\theta }  = 1.5221    $ &    $  \hat{\theta }=  1.5255 $  & $  \hat{\theta }= 1.0883$ & & \\
&&&&& \\
PINARPL(1) & $  \hat{\alpha } =0.5521 $ &   $  \hat{\alpha } =0.5478$ & $  \hat{\alpha }=0.4732$&  303.1880 &   308.0497 \\

 &$  \hat{\theta }  = 1.5221    $ &    $  \hat{\theta }=  1.5255 $  & $  \hat{\theta }= 1.3599$ & & \\
 &&&&& \\
INARG(1) &$  \hat{\alpha } =0.5521 $ &   $  \hat{\alpha } =0.5478$ & $  \hat{\alpha }=0.317 $   & 306.0826 & 310.9443\\

 & $  \hat{P }  = 0.5215 $ &    $  \hat{P }=  0.5221$  & $  \hat{P }= 0.421$ & & \\
&&&&& \\
INARP(1) & $  \hat{\alpha } =0.5521 $ &   $  \hat{\alpha } =0.5478$ & $  \hat{\alpha }=0.3828$ & 347.4463 &  352.308\\

 & $  \hat{\lambda }  =0.9174  $ &    $  \hat{\lambda }=  0.9151$  & $  \hat{\lambda }= 1.240 $ & & \\
\hline
\end{tabular}}
\end{center}\vspace{-0.5cm}
\end{table}

Figure \ref{p32} shows the predicted values to the sample paths of Sudden death series.

\begin{figure}[!h] \vspace{-1cm}
\centering
\includegraphics[height=8.5cm,width=9cm]{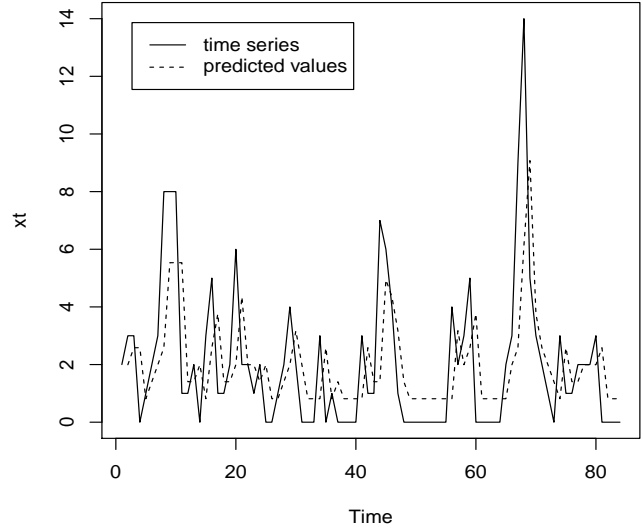} \vspace{-1cm}
\caption{\footnotesize{Predicted values and time series of Sudden death submissions.}} \label{p32}
\end{figure}

\subsection{Weekly counts of the incidence of acute febrile muco-cutaneous lymph node syndrome (MCLS)}
The last example assumes weekly counts of the incidence of acute febrile muco-cutaneous lymph node syndrome (MCLS) in Totori-prefecture, Japan, during 1982.
Time series plot, autocorrelation and partial autocorrelation functions are shown in Figure  \ref{p41}. Sample mean, variance and autocorrelation are respectively, 1.711, 3.111 and 0.5.
 \begin{figure}[!h]
\centering
\includegraphics[height=11.5cm,width=15cm]{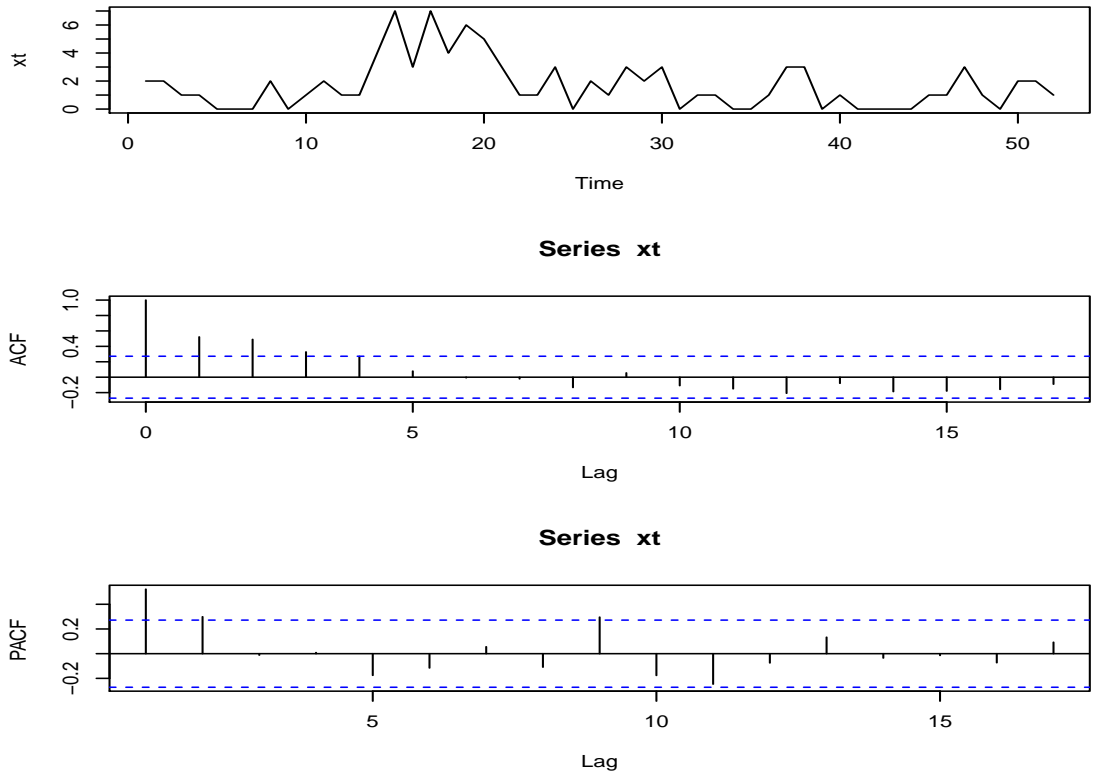} \vspace{-0.5cm}
\caption{\footnotesize{The time series, ACF and PACF plots of the Weekly counts of the incidence of acute febrile muco-cutaneous lymph node syndrome (MCLS) in Totori-prefecture, Japan, during 1982.}} \label{p41}
\end{figure}

The results are shown in Table  \ref{tab4}. According to Table  \ref{tab4}, we see that the AIC and BIC of NBINARPL(1) is smaller than the AIC and BIC of other models and hence, the NBINARPL(1) gives the best fit to this data in comparing with PINARPL(1), BINARPL(1), INARP(1) and INARG(1). So, NBINARPL(1) model with $W_t \sim PL(1.691)$ innovations that gives by
$$X_t = 0.521 * X_{t-1} +W_t ,$$
is more appropriate for this data. The predicted values of the number of polio cases are given by
\begin{eqnarray*}
\hat{X}_1 &=& \frac{\hat{\theta} +2 }{\hat{\theta} (\hat{\theta} +1 )(1 - \hat{\alpha})} =1.693,\\
\hat{X}_i &=& \hat{\alpha} \hat{X}_{i-1} + \frac{\hat{\theta}  +2 }{\hat{\theta} (\hat{\theta} +1 )} =
 0.521\hat{X}_{i-1} + 0.811 ,\quad\quad i=2,3,...,52.
\end{eqnarray*}
Figure \ref{p42} shows the predicted values to the sample paths of weekly counts of the incidence of acute febrile muco-cutaneous lymph node syndrome (MCLS) in Totori-prefecture.

 \begin{table}[!h]
\begin{center}
{\scriptsize
\caption{\footnotesize{Estimated parameters, AIC and BIC for the weekly counts of the incidence of acute febrile muco-cutaneous lymph node syndrome (MCLS) in Totori-prefecture.} }\label{tab4}
\begin{tabular}{ccccccc} \hline
Model &     CLS &     YW & MLE &  AIC &   BIC

\\
\hline
  NBINARPL(1) &    $  \hat{\alpha } =0.524 $ &   $  \hat{\alpha } =0.522$ & $  \hat{\alpha }=0.5209 $   & 170.6369  & 174.5394  \\
 & $  \hat{\theta }  = 1.640 $ &    $  \hat{\theta }=  1.680 $  & $  \hat{\theta }= 1.6908$ & & \\
&&&&& \\
BINARPL(1) & $  \hat{\alpha } =0.524 $ &   $  \hat{\alpha } =0.522$ & $  \hat{\alpha }=0.3832 $   & 172.2558  &176.1583  \\

 & $  \hat{\theta }  = 1.640     $ &    $  \hat{\theta }= 1.680 $  & $  \hat{\theta }= 1.3607$ & & \\
 &&&&& \\

PINARPL(1) & $  \hat{\alpha } =0.524 $ &   $  \hat{\alpha } =0.522$ & $  \hat{\alpha }=0.4804 $   & 171.0987  &175.0012  \\

 & $  \hat{\theta }  = 1.640     $ &    $  \hat{\theta }= 1.680 $  & $  \hat{\theta }= 1.5773$ & & \\
 &&&&& \\
INARG(1) & $  \hat{\alpha } =0.524 $ &   $  \hat{\alpha } =0.522$  & $  \hat{\alpha }=0.3905 $& 172.5549 & 176.4574\\

 & $  \hat{P }  = 0.543     $ &    $  \hat{P }=  0.550 $  & $  \hat{P }= 0.492$ & & \\
&&&&& \\
INARP(1) &  $  \hat{\alpha } =0.524 $ &   $  \hat{\alpha } =0.522$ & $  \hat{\alpha }=0.372  $   & 176.4462  &  180.3487   \\

 & $  \hat{\lambda }  =0.841     $ &    $  \hat{\lambda }=  0.817$  & $  \hat{\lambda }= 1.063$ & & \\
\hline
\end{tabular}}
\end{center}
\end{table}
\begin{figure}[!h]
\centering
\includegraphics[height=9cm,width=9cm]{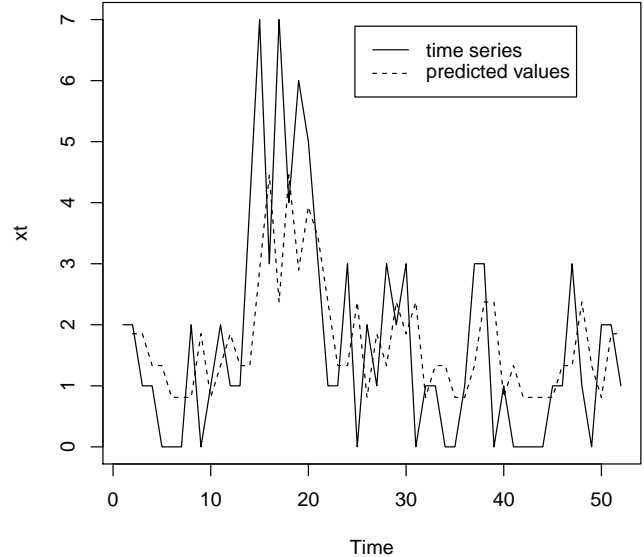} \vspace{-0.5cm}
\caption{\footnotesize{Predicted values and time series of weekly counts of the incidence of acute febrile muco-cutaneous lymph node syndrome (MCLS) in Totori-prefecture.}} \label{p42}
\end{figure}

\newpage
\section{Conclusion}
Integer-valued time series models are very applicable in many fields such as medicine, reliability theory, precipitation, transportation, hotel accommodation and queuing theory. So far, many integer-valued autoregressive process have been introduced by researchers.\\
In this paper we introduce a new stationary first-order integer-valued AR(1) process with Poisson-Lindley innovations based on power series thinning operator. Some mathematical features of these processes are given and estimating the parameters is discussed. Some special cases of this model  (INARPL(1) based on binomial operator,  INARPL(1) based on Poisson operator and INARPL(1) based on negative binomial operator) are studied in some detail.  Finally, some numerical results are presented with a discussion to the obtained results and we fitted PSINARPL(1) model to four real data sets  to show the potentially of the new proposed model.


\end{document}